\newtheorem{defn}{Definition}
\newtheorem{thm}{Theorem}
\newtheorem*{thm*}{Theorem}
\newcommand{\xdashrightarrow}[2][]{\ext@arrow 0359\rightarrowfill@@{#1}{#2}}
\def\rightarrowfill@@{\arrowfill@@\relax\relbar\rightarrow}
\def\arrowfill@@#1#2#3#4{%
  $\m@th\thickmuskip0mu\medmuskip\thickmuskip\thinmuskip\thickmuskip
   \relax#4#1
   \xleaders\hbox{$#4#2$}\hfill
   #3$%
}
\DeclareMathOperator{\degto}{\xdashrightarrow{SLOCC}}
\DeclareMathOperator{\slto}{\xrightarrow{SLOCC}}
\DeclareMathOperator{\mean}{\mathbb{E}}
\DeclareMathOperator{\prob}{Pr}
\DeclareMathOperator{\GHZ}{GHZ}
\DeclareMathOperator{\EPR}{EPR}
\DeclareMathOperator{\mamu}{MaMu}
\newcommand{\ket}[1]{\left|#1\right\rangle}
\title{Entanglement distillation from Greenberger-Horne-Zeililinger shares}
\author[1]{P\'eter Vrana}
\affil[1]{Department of Geometry, Budapest University of Technology and Economics, Egry J\'ozsef u. 1., 1111 Budapest, Hungary}
\author[2]{Matthias Christandl}
\affil[2]{Department of Mathematical Sciences, University of Copenhagen, Universitetsparken 5, 2100 Copenhagen, Denmark}
\begin{document}
\maketitle

\begin{abstract}
We study the problem of converting a product of Greenberger-Horne-Zeilinger (GHZ) states shared by subsets of several parties in an arbitrary way into GHZ states shared by every party. Our result is that if SLOCC transformations are allowed, then the best asymptotic rate is the minimum of bipartite log-ranks of the initial state. This generalizes a result by Strassen on the asymptotic subrank of the matrix multiplication tensor.
\end{abstract}

\section{Introduction}\label{sec:intro}

It has been realized recently (see e.g. \cite{tripartite,VranaChristandl}) that certain problems in algebraic complexity theory can be interpreted as questions regarding asymptotic entanglement transformations by stochastic local operations and classical communication (SLOCC). Most prominently, the asymptotic rate at which GHZ states can be converted to triples of Einstein-Podolsky-Rosen (EPR) pairs, shared between parties AB, BC and AC, is the same as the exponent of matrix multiplication $\omega$, i.e. the infimum of real numbers $\tau$ such that $n\times n$ matrices can be multiplied using $O(n^\tau)$ arithmetic operations \cite{tripartite}. This particular problem is still open, the best bounds currently being $2\le\omega\le 2.3728639$ \cite{LeGall}, but there are several similar problems where the exact value is known. Interestingly, these include the reverse transformation. To state this precisely, we first introduce some notation.
\begin{defn}
Let $V_1,\ldots,V_k$ and $W_1,\ldots,W_k$ be vector spaces and $\psi\in V_1\otimes\cdots\otimes V_k$, $\phi\in W_1\otimes\cdots\otimes W_k$. We say that $\psi$ can be transformed into $\phi$ via SLOCC ($\psi\slto\phi$) if there exist linear transformations $A_i:V_i\to W_i$ such that $(A_1\otimes\cdots\otimes A_k)\psi=\phi$.

The asymptotic SLOCC conversion rate from $\psi$ to $\phi$ is
\begin{equation}
\omega(\psi,\phi)=\lim_{n\to\infty}\frac{1}{n}\inf\left\{m\in\mathbb{N}|\psi^{\otimes m}\slto\phi^{\otimes n}\right\}.
\end{equation}
\end{defn}
It is clear that the definition is not sensitive to multiplication by scalars, so for simplicity, we will work with unnormalized states.

Let $\mamu$ denote the triple of EPR pairs, i.e.
\begin{equation}
\mamu=\EPR_{AB}\otimes\EPR_{BC}\otimes\EPR_{AC}
\end{equation}
and let $\GHZ=\ket{000}+\ket{111}$. Then \cite[Theorem 6.6]{Strassen1}
\begin{equation}
\omega(\mamu,\GHZ)=\frac{1}{2}
\end{equation}

We review the main ingredients of the proof. The first one is a relaxation of SLOCC convertibility:
\begin{defn}
Given tensors $\psi\in V_1\otimes\cdots\otimes V_k$, $\phi\in W_1\otimes\cdots\otimes W_k$, we say that $\psi$ degenerates to $\phi$ if there exist linear transformations $A_i(\epsilon):V_i\to W_i$ depending on $\epsilon$ and $\epsilon^{-1}$ polynomially (i.e. the matrix element functions are polynomials) such that $(A_1(\epsilon)\otimes\cdots\otimes A_k(\epsilon))\psi=\phi+O(\epsilon)$ as $\epsilon\to 0$. This fact will be denoted by $\psi\degto\phi$.
\end{defn}
The key result relating degeneration to asymptotic SLOCC transformations is that $\psi\degto\phi$ implies $\omega(\psi,\phi)\le 1$ (see \cite{Bini,Strassen2} or \cite{VranaChristandl} for a proof using the same notations as here). As an example, $\GHZ\degto W$ leads to a simple proof of $\omega(\GHZ,W)=1$, even though $\GHZ$ cannot be converted to $W$ via SLOCC in a one-shot setting.

Next, $\mamu^{\otimes n}$ is the same as
\begin{equation}\label{eq:mamu}
\sum_{i_1,i_2,i_3=1}^{N}\ket{i_1i_2}\ket{i_2i_3}\ket{i_3i_1}
\end{equation}
with $N=2^n$ up to relabelling the local basis states. For some $g\in\mathbb{N}$, we choose the local linear transformations in the following way \cite{Strassen1}:
\begin{equation}
\begin{split}
A_1(\epsilon)\ket{i_1i_2} & = \epsilon^{i_1^2+2i_1i_2}\ket{i_1i_2}  \\
A_2(\epsilon)\ket{i_2i_3} & = \epsilon^{i_2^2+2i_2(i_3-g)}\ket{i_2i_3}  \\
A_3(\epsilon)\ket{i_3i_1} & = \epsilon^{(i_3-g)^2+2(i_3-g)i_1}\ket{i_3i_1}.
\end{split}
\end{equation}
Applying the product to a term $\ket{i_1i_2}\ket{i_2i_3}\ket{i_3i_1}$ multiplies it by $\epsilon^{(i_1+i_2+i_3-g)^2}$, therefore we have
\begin{equation}
\sum_{i_1,i_2,i_3=1}^{N}\ket{i_1i_2}\ket{i_2i_3}\ket{i_3i_1}\degto\sum_{\substack{1\le i_1,i_2,i_3\le N  \\  i_1+i_2+i_3=g}}\ket{i_1i_2}\ket{i_2i_3}\ket{i_3i_1}.
\end{equation}
Up to relabelling, the resulting state is a GHZ state. To see this, it is enough to note that every local basis element appears at most once, since any two of $i_1$, $i_2$ and $i_3$ determines the third one uniquely.

To complete the proof, one only needs to choose $g$ as a function of $N$ in such a way that the number of terms grows as quadratically. This can be ensured by choosing $g=N$.

One possible generalization of the matrix multiplication state to $k>3$ parties can be obtained by replacing the EPR pairs by arbitrary collections of GHZ states shared by subsets of the parties. The pattern can be conveniently encoded in a hypergraph on $\{1,\ldots,k\}$ with multiple edges allowed. For example, the matrix multiplication state corresponds to the complete graph $K_3$. The statement of our main result involves the concept of edge-connectivity. For the readers' convenience, we recall the definition here.
\begin{defn}
A hypergraph $H$ is connected if for any pair of vertices $x$ and $y$ there is a sequence $(x=v_0,e_1,v_1,e_2,v_2,\ldots,e_n,v_n=y)$ such that the $e_i$ are hyperedges, $v_i$ are vertices and $e_i$ is incident to $v_{i-1}$ and $v_i$ for $i=1,\ldots,n$.

A hypergraph is $l$-edge-connected if it remains connected after removing any subset of strictly less than $l$ edges. The edge-connectivity $\lambda(H)$ of a hypergraph $H$ is the largest $l$ such that $H$ is $l$-edge-connected.
\end{defn}

Now we are ready to state our main result.
\begin{thm}\label{thm:mainres}
Let the state corresponding to the hypergraph $H$ be $\GHZ^H$. Then
\begin{equation}\label{eq:GHZrate}
\omega(\GHZ^H,\GHZ)=\frac{1}{\lambda(H)}
\end{equation}
\end{thm}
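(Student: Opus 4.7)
For any nontrivial bipartition $V=S\sqcup\bar S$, each edge $e$ of $H$ with both $e\cap S$ and $e\cap\bar S$ nonempty contributes a factor of $2$ to the Schmidt rank of $\GHZ^H$ across $(S|\bar S)$, while the remaining edges contribute a factor of $1$. Hence $\rk_{S|\bar S}(\GHZ^H)=2^{e(S,\bar S)}$, where $e(S,\bar S)$ counts the edges crossing the cut, and the minimum over nontrivial bipartitions equals $2^{\lambda(H)}$. Since $\rk_{S|\bar S}(\GHZ)=2$ for every nontrivial cut, and Schmidt rank is multiplicative under tensor product and monotone under SLOCC, a conversion $(\GHZ^H)^{\otimes m}\slto\GHZ^{\otimes n}$ forces $m\lambda(H)\ge n$, whence $\omega(\GHZ^H,\GHZ)\ge 1/\lambda(H)$.

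\textbf{Achievability, $\omega(\GHZ^H,\GHZ)\le 1/\lambda(H)$.} I plan to generalize the Strassen construction recalled for $K_3$. Setting $N=2^n$, the state $(\GHZ^H)^{\otimes n}$ becomes, after local base changes, $\sum_{(i_e)\in[N]^E}\bigotimes_{v\in V}\bigotimes_{e\ni v}\ket{i_e}$, where party $v$ holds one $N$-dimensional register per edge incident to $v$. I will apply diagonal polynomial degenerations $A_v(\epsilon)\ket{(i_e)_{e\ni v}}=\epsilon^{P_v((i_e)_{e\ni v})}\ket{(i_e)_{e\ni v}}$ with each $P_v$ an integer-valued polynomial depending only on the coordinates $i_e$ for $e\ni v$. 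If the total exponent $Q((i_e))=\sum_v P_v$ is pointwise a nonnegative integer, then as $\epsilon\to 0$ the state degenerates to $\sum_{(i_e)\in T}\bigotimes_v\bigotimes_{e\ni v}\ket{i_e}$, where $T=\{(i_e)\in[N]^E:Q((i_e))=0\}$. Provided that $|T|$ grows like $N^{\lambda(H)}$ and that for each vertex $v$ the projection of $T$ onto the coordinates for edges incident to $v$ is injective, the resulting state is, up to local relabeling, a GHZ state of rank $|T|$, which itself degenerates to $\GHZ^{\otimes\lfloor\log_2|T|\rfloor}$. Combined with the implication $\psi\degto\phi\Rightarrow\omega(\psi,\phi)\le 1$ recalled in the text and $\log_2|T|/n\to\lambda(H)$, this yields $\omega(\GHZ^H,\GHZ)\le 1/\lambda(H)$.

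\textbf{The main obstacle: constructing $Q$.} I will take $Q=\sum_j(L_j-g_j)^2$ as a sum of squares of affine forms with integer coefficients and integer constants $g_j$. Expanding one square produces cross terms $2a_{j,e}a_{j,f}i_ei_f$, which can be assigned to the vertex-local polynomial $P_v$ only when $v\in e\cap f$; consequently each $L_j$ must be supported on a set of edges that pairwise share a vertex, i.e.\ on an ``edge clique'' in the line graph of $H$. To attain $|T|\approx N^{\lambda(H)}$ together with the vertex-wise injectivity, I need $|E|-\lambda(H)$ linearly independent such edge-clique-supported forms whose joint kernel $K\subseteq\mathbb{R}^E$ satisfies $K\cap\{d\in\mathbb{R}^E:d_e=0\text{ for all }e\ni v\}=\{0\}$ for every $v\in V$. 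This is the purely combinatorial heart of the argument: $\lambda$-edge-connectivity gives $d_v\ge\lambda(H)$ for all $v$, so the dimension count is compatible, but one must also produce enough edge-clique-supported forms to realize rank $|E|-\lambda(H)$ while keeping the kernel transverse to every ``star complement''. I expect to establish the existence of such a system by induction on $H$ using edge-contraction or vertex-splitting operations that preserve edge-connectivity, or by invoking a hypergraph analogue of the Nash--Williams or Menger structural theorems; this combinatorial existence step is where the bulk of the technical work will lie.
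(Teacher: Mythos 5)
Your converse is exactly the paper's lower-bound argument (multiplicativity and SLOCC-monotonicity of bipartite rank, minimum cut $=\lambda(H)$), and your achievability framework --- diagonal $\epsilon$-degenerations with exponent a sum of squares of affine forms in the edge indices --- is also the paper's. The problem is the combinatorial reduction you extract from the cross terms. You argue that since the monomial $2a_{j,e}a_{j,f}i_ei_f$ can only be assigned to a vertex in $e\cap f$, \emph{each individual form} $L_j$ must be supported on an edge clique of the line graph $L(H)$. This overlooks cancellation across $j$: what must be assignable to a common vertex is only the \emph{total} coefficient of $i_ei_f$ in $Q=\sum_j(L_j-g_j)^2$, namely $2\sum_j a_{j,e}a_{j,f}=2\langle c_e,c_f\rangle$ where $c_e=(a_{j,e})_j$. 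So the correct requirement is merely that $\langle c_e,c_f\rangle=0$ whenever $e$ and $f$ share no vertex, i.e.\ that $e\mapsto c_e$ is an orthogonal representation of $L(H)$ in $\mathbb{R}^{|E|-\lambda(H)}$ (in general position, to get your injectivity/transversality condition). This weaker condition is essential: your stronger clique-support condition is already unachievable at the optimal dimension for the $4$-cycle. There $|E|=4$, $\lambda=2$, so you need a $2\times 4$ coefficient matrix whose rows are each supported on a pair of adjacent edges and all four of whose ``adjacent-pair'' $2\times2$ minors are invertible (these are the star complements); covering all four columns forces the two supports to be disjoint adjacent pairs, and then the minor on one of those pairs has a zero row. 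By contrast, the orthogonal-representation relaxation succeeds, e.g.\ $c_{e_1}=(1,0)$, $c_{e_3}=(0,1)$, $c_{e_2}=(1,1)$, $c_{e_4}=(1,-1)$: the offending cross term $i_{e_2}i_{e_4}$ appears in $L_1^2$ and $L_2^2$ with opposite signs and cancels.

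Consequently the ``combinatorial heart'' you plan to prove by induction or Nash--Williams/Menger-type arguments is, as stated, false, and this is the genuine gap. The paper closes it by observing that $\lambda(H)$-edge-connectivity of $H$ makes $L(H)$ a $\lambda(H)$-vertex-connected graph on $|E|$ vertices, and then invoking the theorem of Lov\'asz, Saks and Schrijver that an $(n-d)$-vertex-connected graph on $n$ vertices has a general-position orthogonal representation in $\mathbb{R}^d$; a separate perturbation/Gram--Schmidt argument converts this to a representation over $\mathbb{Q}$, hence $\mathbb{Z}$, as required for integer exponents of $\epsilon$. Your remaining step --- choosing $g$ so that the solution set $T$ has size $\Omega(N^{\lambda(H)})$ --- is only asserted, but it is easily supplied by averaging over $g$ in a cube of side $O(N)$, as the paper does; that part is not a real obstruction.
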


Note that the right hand side is easily seen to be a lower bound. This follows from the fact that 1) the bipartite rank across any bipartition cannot be increased asymptotically, 2) the bipartite log-rank of $\GHZ^H$ across the bipartition $S$-$\overline{S}$ (where $\overline{S}$ denotes the complement) is the number of hyperedges having a nonvanishing intersection with both $S$ and $\overline{S}$, 3) the minimum of these ranks is therefore $\lambda(H)$, and 4) the rank of $\GHZ$ over any bipartition is $2$.

\section{Proof of main result}\label{sec:proof}

By a hypergraph we mean a triple $(V,E,I)$ where $V$ and $E$ are sets and $I\subseteq V\times E$. Elements of $V$ are called vertices, $E$ is the set of edges, and $v\in V$ is said to be incident with $e\in E$ if $(v,e)\in I$. The sum of two hypergraphs $H_1=(V,E_1,I_1)$ and $H_2=(V,E_2,I_2)$ on a common vertex set $V$ is defined as $H_1+H_2=(V,E_1\sqcup E_2,I_1\sqcup I_2)$, where $\sqcup$ stands for disjoint union and $I_1\sqcup I_2$ is thought of as a subset of $V\times(E_1\sqcup E_2)$ in the obvious way. Any hypergraph can be uniquely written as the sum of hypergraphs having one edge each.

If $V=[k]=\{1,2,\ldots,k\}$ for some positive integer $k$, we define the states $\GHZ^H_r$ by requiring $\GHZ^{H_1+H_2}_r=\GHZ^{H_1}_r\otimes\GHZ^{H_2}_r$ and that for the hypergraph $H_S$ which consists of a single edge $S\subseteq[k]$, the state $\GHZ^{H_S}_r$ is an $r$-level GHZ state shared among the parties in $S$, i.e.
\begin{equation}
\GHZ^{H_S}_r=\ket{00\ldots 0}_{\overline{S}}\otimes(\ket{11\ldots 1}_S+\ket{22\ldots 2}_S+\cdots+\ket{rr\ldots r}_S)
\end{equation}
These states are well-defined only up to SLOCC equivalence, but this is sufficient for our purposes. We will make use of the fact that
\begin{equation}
\omega(\GHZ^H_a,\GHZ^H_b)=\frac{\log b}{\log a}
\end{equation}
for any hypergraph $H$ having at least one edge of size at least $2$. If $|E|=1$ and the only edge is incident with every vertex, we will also write $\GHZ_r$ instead of $\GHZ^H_r$.

Similarly to eq. \eqref{eq:mamu}, the states $\GHZ^H$ may be written as a multiple sum. For simplicity, we identify the edge set with $[l]$ for some $l\in\mathbb{N}$. To each edge $e\in E$ we introduce a summation index $i_e$, and for each $j\in[k]$ we let $E_j$ be the set of edges incident with the vertex $j$. Then we can write
\begin{equation}
\GHZ^H_n=\sum_{i_1,\ldots,i_l=0}^{n-1}\ket{(i_e)_{e\in E_1}}_1\ket{(i_e)_{e\in E_2}}_2\cdots\ket{(i_e)_{e\in E_k}}_k
\end{equation}
Without loss of generality we will assume that there are no empty edges.

Following the idea in Strassen's proof, but more generally, we wish to consider several linear equalities involving the indices, and apply local $\epsilon$-dependent diagonal operations in such a way that the leading order contains precisely the terms satisfying the equalities. Such a system of equations can be written as
\begin{equation}
c_1i_1+c_2i_2+\cdots+c_li_l=g
\end{equation}
where $c_1,\ldots,c_l,g\in\mathbb{Z}^d$, $d\ge 1$. Subtracting $g$ from both sides and taking the inner product with itself results in
\begin{equation}
0=\langle g,g\rangle+\sum_{e\in E}\left(\langle c_e,c_e\rangle i_e^2-\langle c_e,g\rangle i_e\right)+\sum_{e,f\in E}\langle c_e,c_f\rangle i_ei_f
\end{equation}
We need to distribute the terms among the vertices in such a way that an index $i_e$ can only appear at vertices incident with $e$. This is always possible for the first term and the following sum, while the condition for the double sum is that $\langle c_e,c_f\rangle=0$ whenever there is no vertex incident to both $e$ and $f$. In other words, $c:E\to\mathbb{Z}^d$ is an orthogonal representation of the line graph $L(H)$ of $H$. Given such a $c$, we can choose local $\epsilon$-dependent operators $A_1(\epsilon),\ldots,A_k(\epsilon)$ in such a way that
\begin{multline}
(A_1(\epsilon)\otimes\cdots\otimes A_k(\epsilon))\ket{(i_e)_{e\in E_1}}_1\ket{(i_e)_{e\in E_2}}_2\cdots\ket{(i_e)_{e\in E_k}}_k  \\  =\epsilon^{\langle c_1i_1+\cdots+c_li_l-g,c_1i_1+\cdots+c_li_l-g\rangle}\ket{(i_e)_{e\in E_1}}_1\ket{(i_e)_{e\in E_2}}_2\cdots\ket{(i_e)_{e\in E_k}}_k,
\end{multline}
which shows that
\begin{equation}
\GHZ^H_n\degto\sum_{\substack{ 0\le i_1,\ldots,i_l\le n-1  \\  c_1i_1+\cdots+c_li_l=g}}\ket{(i_e)_{e\in E_1}}_1\ket{(i_e)_{e\in E_2}}_2\cdots\ket{(i_e)_{e\in E_k}}_k
\end{equation}

Next we want to ensure that the resulting state is a GHZ state. This happens precisely if the values of the indices at any one vertex determine the remaining ones uniquely. After fixing the indices at a vertex, $c_1i_1+\cdots+c_li_l=g$ becomes a system of linear equations in the remaining indices, therefore the condition is that the vectors corresponding to edges not incident with any one vertex are linearly independent. It follows that the dimension $d$ must be at least $|E|-\min_j|E_j|$, and therefore a sufficient condition is that any $d$ vectors are linearly independent. Orthogonal representations of $L(H)$ with this property are said to be in general position. \cite{LSS1}

We now show that after fixing the coefficients in this way, the right hand side $g$ can be chosen such that the number of solutions is large.
\begin{thm}
Let $H=([k],E,I)$ be a hypergraph and suppose that $c:E\to\mathbb{Z}^d$ is a general-position orthogonal representation of its line graph. Then
\begin{equation}
\omega(\GHZ^H_2,\GHZ_2)\le\frac{1}{|E|-d}.
\end{equation}
\end{thm}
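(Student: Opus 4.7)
The plan is to combine the degeneration construction developed just above with a pigeonhole count for the solutions of $\sum_e c_e i_e = g$, and then to amplify the estimate by taking tensor powers.

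First, for each $n \geq 1$ and each $g \in \mathbb{Z}^d$ the construction yields
\begin{equation}
\GHZ^H_n \degto \phi_{n,g} := \sum_{\substack{0 \leq i_e \leq n-1 \\ \sum_e c_e i_e = g}} \ket{(i_e)_{e\in E_1}}_1 \ket{(i_e)_{e\in E_2}}_2 \cdots \ket{(i_e)_{e\in E_k}}_k,
\end{equation}
and, by the general-position hypothesis on $c$, every local basis vector appears at most once in $\phi_{n,g}$ (fixing the indices at any vertex determines the rest). Consequently $\phi_{n,g}$ is SLOCC-equivalent to $\GHZ_{m(n,g)}$, where $m(n,g)$ is the number of tuples satisfying the constraint.

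Second, I would apply pigeonhole to choose a good $g$. The map $(i_e)_e \mapsto \sum_e c_e i_e$ sends $\{0,\ldots,n-1\}^{|E|}$ into a box of integer vectors of side length $O(n)$ in each coordinate, hence of cardinality at most $K n^d$ for a constant $K$ depending only on the fixed vectors $c_e$. Therefore some $g^{\ast}$ is attained by at least $m_n := n^{|E|}/(K n^d) = K^{-1} n^{|E|-d}$ tuples, so that $\GHZ^H_n \degto \GHZ_{m_n}$.

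Third, specialize to $n = 2^N$ and combine. Since $(\GHZ^H_2)^{\otimes N}$ is SLOCC-equivalent to $\GHZ^H_{2^N}$ (edge-by-edge, $\GHZ^{H_e}_a \otimes \GHZ^{H_e}_b$ is SLOCC-equivalent to $\GHZ^{H_e}_{ab}$), the degeneration above implies $\omega(\GHZ^H_{2^N}, \GHZ_{m_N}) \leq 1$; combining this with $\omega(\GHZ^H_2, (\GHZ^H_2)^{\otimes N}) = N$ and $\omega(\GHZ_{m_N}, \GHZ_2) = 1/\log_2 m_N$, the sub-multiplicativity of $\omega$ yields
\begin{equation}
\omega(\GHZ^H_2, \GHZ_2) \leq \frac{N}{\log_2 m_N} \leq \frac{N}{N(|E|-d) - \log_2 K},
\end{equation}
which tends to $1/(|E|-d)$ as $N \to \infty$. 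The argument is essentially mechanical once the preceding construction is in hand; the only step requiring real care is the pigeonhole bound, where one must verify that the image of the constraint map contains only $O(n^d)$ integer points. This is immediate from the triangle inequality applied to the fixed integer vectors $c_e$, so it is not a genuine obstacle.
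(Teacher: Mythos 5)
Your proposal is correct and follows essentially the same route as the paper: the same degeneration to the constrained sum, the same observation that general position makes the result a GHZ state, an averaging/pigeonhole selection of $g$ yielding $\Theta(n^{|E|-d})$ solutions, and the same chaining of rates $\omega(\GHZ^H_2,\GHZ^H_n)\,\omega(\GHZ^H_n,\GHZ_M)\,\omega(\GHZ_M,\GHZ_2)$. The only cosmetic difference is that the paper phrases the pigeonhole step probabilistically, computing the expected number of solutions for a uniformly random $G$ in a cube of $O(n^d)$ integer points.
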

\begin{proof}
Let $C$ be the maximum of $1$-norms of the vectors $\{c_e\}_{e\in E}$, and choose $G$ uniformly at random from the cube $[-Cn,Cn-1]^d\cap\mathbb{Z}^d$. Let $X_{(i_1,\ldots,i_l)}$ be the indicator random variable of the event that $(i_1,\ldots,i_l)$ is a solution of the (random) system of equations $c_1i_1+\cdots+c_li_l=G$. The number of solutions is
\begin{equation}
N=\sum_{i_1,\ldots,i_l=0}^{n-1}X_{(i_1,\ldots,i_l)}.
\end{equation}
Since
\begin{equation}
\mean X_{(i_1,\ldots,i_l)}=\prob(X_{(i_1,\ldots,i_l)}=1)=\prob(c_1i_1+\cdots+c_li_l=G)=(2Cn)^{-d},
\end{equation}
the expected number of solutions is
\begin{equation}
\mean N=n^l(2Cn)^{-d}=(2C)^{-d}n^{l-d}
\end{equation}
Therefore there is at least one vector $g$ such that $c_1i_1+\cdots+c_li_l=g$ has at least $M:=\lceil(2C)^{-d}n^{l-d}\rceil$ solutions. This implies that
\begin{equation}
\begin{split}
\omega(\GHZ^H_2,\GHZ_2)
 & \le \omega(\GHZ^H_2,\GHZ^H_n)\omega(\GHZ^H_n,\GHZ_M)\omega(\GHZ_M,\GHZ_2)  \\
 & \le \frac{\log n}{\log \lceil(2C)^{-d}n^{l-d}\rceil}\to\frac{1}{l-d}=\frac{1}{|E|-d}
\end{split}
\end{equation}
as $n\to\infty$.
\end{proof}

It follows that the best bound is obtained if $d$ is as small as possible. The following result completes the proof of Theorem \ref{thm:mainres} by showing that the optimal value is $d=|E|-\lambda(H)$.
\begin{thm}
Let $H=([k],E,I)$ be a hypergraph. Then its line graph has a general-position orthogonal representation $c:E\to\mathbb{Z}^{|E|-\lambda(H)}$.
\end{thm}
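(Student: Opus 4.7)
The plan is to reduce the statement to the Lov\'asz--Saks--Schrijver theorem~\cite{LSS1}, which asserts that a graph $G$ on $n$ vertices admits a general-position orthogonal representation in dimension $n-k$ if and only if $G$ is $k$-vertex-connected. Applied with $G=L(H)$, $n=|E|$, and $k=\lambda(H)$, this reduces the existence of the required representation over $\mathbb{R}$ to the combinatorial statement that $L(H)$ is $\lambda(H)$-vertex-connected. The remaining task is then to realise the representation over $\mathbb{Z}$ rather than $\mathbb{R}$.

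To verify that $L(H)$ is $\lambda(H)$-vertex-connected, let $S\subseteq E$ with $|S|<\lambda(H)$. Because $H$ is $\lambda(H)$-edge-connected, the hypergraph $H-S$ obtained by deleting the edges in $S$ is still connected. Its line graph $L(H-S)$ is canonically identified with the induced subgraph of $L(H)$ on $E\setminus S$, i.e., with $L(H)$ after deleting the vertices $S$. This induced subgraph is connected: given any two hyperedges $e,f\in E\setminus S$, a walk in $H-S$ from a vertex of $e$ to a vertex of $f$ traces out a sequence of hyperedges in which consecutive ones share a vertex, and that sequence is exactly a walk in $L(H-S)$ from $e$ to $f$. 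Hence $L(H)$ remains connected after removing any fewer than $\lambda(H)$ vertices.

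Applying the cited theorem with $k=\lambda(H)$ furnishes a general-position orthogonal representation of $L(H)$ in $\mathbb{R}^{|E|-\lambda(H)}$. The variety of orthogonal representations is cut out by polynomial equations with integer coefficients, and general position (non-vanishing of every $(|E|-\lambda(H))\times(|E|-\lambda(H))$ subdeterminant of the matrix with columns $c_e$) is a Zariski-open condition. Since this open set is nonempty over $\mathbb{R}$, a standard density argument---or an inspection of the essentially constructive LSS proof---yields a rational point in it, and clearing denominators in each column separately produces integer vectors without disturbing either orthogonality or general position. The substantive content thus sits in the cited theorem, while the line-graph connectivity step and the passage from $\mathbb{Q}$ to $\mathbb{Z}$ are routine; the main obstacle is effectively the LSS theorem itself.
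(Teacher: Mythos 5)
Your reduction to the Lov\'asz--Saks--Schrijver theorem, the verification that $L(H)$ is $\lambda(H)$-vertex-connected, and the clearing of denominators at the end all match the paper's proof (your connectivity argument is in fact more detailed than the paper's, which merely asserts that edge-connectivity of $H$ passes to vertex-connectivity of $L(H)$). The gap is in the step you dismiss as ``a standard density argument.'' The set of orthogonal representations of $L(H)$ in $\mathbb{R}^d$ is a real algebraic variety cut out by the integer-coefficient quadrics $\langle c_e,c_f\rangle=0$, and it is simply not true in general that the rational points of such a variety are dense in its real points, or even that a nonempty open subset of the real points contains a rational one (compare $x^2+y^2=3$, which has real points but no rational points at all). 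Knowing that the general-position locus is Zariski-open and nonempty over $\mathbb{R}$ therefore does not by itself produce a rational representation. Your fallback, ``inspection of the essentially constructive LSS proof,'' also fails: the paper explicitly notes that the LSS argument ``relies in an essential way on using real numbers'' (it is a genericity/measure argument over $\mathbb{R}$) and cannot be directly adapted.

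This missing step is precisely the technical content of the paper's proof. The authors build a Gram--Schmidt-style retraction $\mathcal{O}_G:(\mathbb{R}^d)^V\to(\mathbb{R}^d)^V$ that (i) always outputs an orthogonal representation, (ii) fixes orthogonal representations, (iii) maps rational inputs to rational outputs because the projections are given by rational formulas in inner products, and (iv) is continuous at any general-position orthogonal representation $f$ since the relevant Gram determinants are nonzero there. Combining (ii) and (iv) with the openness of the general-position condition gives a neighbourhood $U$ of the real LSS representation on which $\mathcal{O}_G$ lands in the general-position orthogonal representations; picking a rational point $c_0\in U$ and taking $c=\mathcal{O}_Gc_0$ yields the rational representation. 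To repair your proof you would need to supply this retraction (or some other rationality-preserving parametrization of the variety near $f$); as written, the passage from $\mathbb{R}$ to $\mathbb{Q}$ is asserted rather than proved, and it is the one point where the theorem does not follow formally from the cited literature.
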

\begin{proof}
It is enough to see that there is an orthogonal representation $c:E\to\mathbb{Q}^{|E|-\lambda(H)}$, since after multiplying each vector by the least common denominator of its entries, it becomes one in $\mathbb{Z}^{|E|-\lambda(H)}$.

Since $H$ is $\lambda(H)$-edge-connected, its line graph is $\lambda(H)$-vertex-connected. A result by Lov\'asz, Saks and Schrijver \cite{LSS1,LSS2} states that any $(n-d)$-vertex-connected graph with $n$ vertices admits a general-position orthogonal representation in $\mathbb{R}^d$. Their proof relies in an essential way on using real numbers, and it does not seem to be possible to directly adapt the idea to our case. However, it is possible to deduce the existence of a general-position orthogonal representation in $\mathbb{Q}^d$ as follows.

Let $G=(V,E)$ be an $(n-d)$-vertex-connected graph with $n$ vertices, and let $V=\{v_1,v_2,\ldots,v_n\}$ be an ordering of the vertices. We construct a map $\mathcal{O}_G:(\mathbb{R}^d)^V\to(\mathbb{R}^d)^V$ recursively as follows. If $f:V\to\mathbb{R}^d$ is any function, then $(\mathcal{O}_Gf)(v_1)=f(v_1)$. If $i>1$, we let $A_i=\{v_j|j<i\text{ and }\{v_i,v_j\}\notin E\}$. Let $P_i$ denote the orthogonal projection onto the subspace spanned by $\{(\mathcal{O}_Gf)(v)|v\in A_i\}$. Then we set $(\mathcal{O}_Gf)(v_i)=(I-P_i)f(v_i)$.

It is clear from the definition that $\mathcal{O}_Gf$ is an orthogonal representation for any $f$, and that if $f$ is an orthogonal representation, then $\mathcal{O}_Gf=f$.

Let us find a more explicit form of the projections $P_i$. Let us write $A_i=\{v_{j_1},v_{j_2},\ldots,v_{j_r}\}$ with $j_1<j_2<\ldots<j_r$. To find $P_i$, we first orthogonalize the vectors $(\mathcal{O}_Gf)(v_{j_m})$, i.e. let $g_1=(\mathcal{O}_Gf)(v_{j_1})$ and
\begin{equation}
g_k=(\mathcal{O}_Gf)(v_{j_k})-\sum_{m=1}^{k-1}\frac{\langle g_m,(\mathcal{O}_Gf)(v_{j_k})\rangle}{\langle g_m,g_m\rangle}g_m
\end{equation}
for $k=2,\ldots,r$. Here and in the following sum we exclude the terms with $g_m=0$. Then the projection can be written as
\begin{equation}
P_ix=\sum_{m=1}^{r}\frac{\langle g_m,x\rangle}{\langle g_m,g_m\rangle}g_m.
\end{equation}
From this form we can see that if $f:V\to\mathbb{Q}^d$ then $P_i$ maps $\mathbb{Q}^d$ into itself, therefore $\mathcal{O}_Gf:V\to\mathbb{Q}^d$ as well.

If $f:V\to\mathbb{R}^d$ is a function such that none of the denominators in the above expression of $P_i$ is $0$, then $\mathcal{O}_Gf$ is clearly continuous at $f$. This holds in particular if $f$ is a general-position orthogonal representation, because in this case $\mathcal{O}_Gf=f$, and the families $(f(v))_{v\in A_i}$ are linearly independent since $|A_i|\le d$.

Now pick any general-position orthogonal representation $f:V\to\mathbb{R}^d$. By continuity at $f$ and using that being in general position is an open condition, there is an open neighbourhood $U$ of $f$ such that $f'\in U$ implies that $\mathcal{O}_Gf'$ is also a general-position orthogonal representation. Since $U\neq\emptyset$ is open, there is a function $c_0:V\to\mathbb{Q}^d$ in $U$, therefore $c:=\mathcal{O}_Gc_0$ has the desired properties.
\end{proof}

\section{Discussion}

Theorem \ref{thm:mainres} has a number of interesting special cases. Suppose first that the hypergraph $H$ is a graph consisting of a single path going through every vertex. Then clearly $\lambda(H)=1$, i.e. asymptotically one GHZ state per copy can be extracted via SLOCC. Of course, this can be easily proved without our result, since by teleportation the transformation can be performed on a single copy exactly via LOCC. On the other hand, if we add a single new edge joining the two endpoints, the graph becomes a cycle, which is 2-edge-connected, therefore asymptotically \emph{two} GHZ states per copy can be obtained. We do not know if the same can be accomplished via LOCC with asymptotically vanishing error.

For the next example, let $K_k^l$ denote the complete $l$-uniform hypergraph on $[k]$, i.e. the multiplicity of every $l$-element subset in $K_k^l$ is one and there are no other edges. Then it is not difficult to see that a minimum cut is obtained by removing every edge incident with a distinguished vertex, therefore
\begin{equation}
\omega(\GHZ^{K_k^l}_2,\GHZ_2)=\frac{1}{\binom{k-1}{l-1}}.
\end{equation}
This special case with $l=2$ as well as the result for cycle graphs have recently found applications in complexity theory. In particular, ref. \cite{Buhrman} shows that our results imply new protocols and bounds in nondeterministic multiparty quantum communication complexity.

Now let $H$ be an arbitrary hypergraph and suppose that instead of GHZ states, the parties wish to distill EPR pairs shared between a specified pair $AB$. Suppose that the minimum cut separating $A$ from $B$ consists of $t$ edges. By Menger's theorem for hypergraphs, there exist $t$ edge-disjoint paths $P_1,\ldots,P_t$ between $A$ and $B$ in $H$. Using Theorem \ref{thm:mainres} for the subhypergraph $P_i$, $\GHZ^{P_i}_2$ can be transformed to a GHZ state on the subset of vertices incident with at least one edge in $P_i$, which in turn can be converted to an EPR-pair between $A$ and $B$. This proves that $\omega(\GHZ^H_2,\EPR_{AB})=1/t$. This transformation also has an asymptotic LOCC counterpart with the same rate, under the name localizable entanglement \cite{SVW}.

Finally, let us mention that our result can be easily extended to products of GHZ-type states with possibly different number of levels. In this case, the minimum bipartite log-rank gives the asymptotic rate at which GHZ states can be obtained. Equivalently, $H$ may be replaced with a hypergraph with weighted edges, where the weight corresponding to an $r$-level GHZ state is $\log r$. In this case $\lambda(H)$ on the right hand side of eq. \eqref{eq:GHZrate} should be interpreted as the minimum cut weight.

\section{Acknowledgement}

We thank Jeroen Zuiddam for helpful discussions. MC acknowledges financial support from the European Research Council (ERC Grant Agreement no 337603), the Danish Council for Independent Research (Sapere Aude) and the Swiss National Science Foundation (project no PP00P2\_150734). Part of this work was done while the authors were with ETH Zurich.

\end{document}